\documentclass{cccg24}
\usepackage{comment}
\usepackage{array,amsmath}
\usepackage{amssymb}
\usepackage{graphicx}
\usepackage{caption}
\usepackage{url}
\usepackage{subcaption}
\makeatletter
\usepackage{authblk}
\usepackage{algorithm}
\usepackage{algorithmic}

\newtheorem{corollary}[theorem]{Corollary}
\makeatother

\title{Multirobot Watchman Routes in a Simple Polygon}
        \author {Joseph S. B. Mitchell\thanks{Department of Applied Mathematics and Statistics, Stony Brook University, \texttt{joseph.mitchell@stonybrook.edu}} \hspace{0.1\textwidth} Linh Nguyen\thanks{Department of Applied Mathematics and Statistics, Stony Brook University, \texttt{linh.nguyen.1@stonybrook.edu}}}
\begin{document}
	\maketitle
\begin{abstract}
    The well-known \textsc{Watchman Route} problem seeks a shortest route in a polygonal domain from which every point of the domain can be seen. In this paper, we study the cooperative variant of the problem, namely the \textsc{$k$-Watchmen Routes} problem, in a simple polygon $P$. We look at both the version in which the $k$ watchmen must collectively see all of $P$, and the quota version in which they must see a predetermined fraction of $P$'s area.
	
    We give an exact pseudopolynomial time algorithm for the \textsc{$k$-Watchmen Routes} problem in a simple orthogonal polygon $P$ with the constraint that watchmen must move on axis-parallel segments, and there is a given common starting point on the boundary. Further, we give a fully polynomial-time approximation scheme and a constant-factor approximation for unconstrained movement. For the quota version, we give a constant-factor approximation in a simple polygon, utilizing the solution to the (single) \textsc{Quota Watchman Route} problem.
\end{abstract}
\section{Introduction}
In 1973, Victor Klee introduced the \textsc{Art Gallery} problem: given an art gallery with $n$ walls (a polygon $P$), determine the minimum number of stationary guards at points within $P$ such that every point of $P$ can be seen by at least one guard point. The \textsc{Art Gallery} problem and its many variants have since been the subject of a large body of research in computational geometry and algorithms. 

When guards are mobile, a single guard suffices to see a connected domain; thus, we are interested in finding routes for one or more guards that optimize some aspects of the guard(s)' movement (e.g., path lengths, the number of turns, etc). The problem of minimizing the distance that one guard must travel to see the entire polygon is the \textsc{Watchman Route} problem (WRP). Chin and Ntafos~\cite{CN} introduced the WRP, proved NP-hardness in polygons with holes (see \cite{dumitrescu2012watchman}) and gave an $O(n)$ algorithm for simple orthogonal polygons. In (general) simple polygons, there are exact polynomial-time algorithms; the current best running times are $O(n^3\log n)$ for the \textit{anchored} version (a starting point $s$ which the route must pass through is given) and $O(n^4\log n)$ for the \textit{floating} version (no starting point is given)~\cite{dror2003touring}.

In some settings, complete coverage might not be feasible or necessary, thus we are also interested in computing a shortest route that sees at least an area of $A \ge 0$ within $P$. This is known as the \textsc{Quota Watchman Route} problem (QWRP), introduced in~\cite{huynh_et_al:LIPIcs.SWAT.2024.27}. In contrast to the tractable WRP, the QWRP is (weakly) NP-hard, but a fully polynomial-time approximation scheme (FPTAS) is known. Any results about the QWRP can be adapted to the WRP by simply letting $A$ be equal to the area of $P$.

We consider the generalization to multiple agents of both the WRP and the QWRP, namely the \textsc{$k$-Watchmen Routes} problem ($k$-WRP) and the \textsc{Quota $k$-Watchmen Routes} problem (Q$k$-WRP), with the objective of minimizing the length of the longest path traveled by any one watchman. Even in a simple polygon, when no starting points are specified (so, we are to determine the best starting locations), both problems are NP-hard to approximate within any multiplicative factor~\cite{packer2008computing}.

We thus focus on the (boundary) anchored version, in which a team of robots or searchers enter a domain $P$ through a door on its boundary to search for a stationary target, which may be randomly distributed within the domain; the objective is to plan for an optimal collective effort to guarantee at least a certain probability of detection (1 in the $k$-WRP and some $p\in[0,1]$ in the Q$k$-WRP). We consider the number, $k$, of robots to be fixed and relatively small, as in most practical situations it is infeasible to employ arbitrarily many watchmen/robots/agents. We present, for any fixed $k$, a pseudopolynomial-time (polynomial in the number $n$ of vertices of $P$ and the length of the longest edge of $P$) exact algorithm to solve the anchored $k$-WRP in a simple orthogonal (integral coordinate) polygon $P$ under L1 distance. The pseudopolynomial-time exact algorithm is the basis for the FPTAS for L1 distance and the $(\sqrt{2} + \varepsilon)$-approximation for L2 distance. For the Q$k$-WRP, we give polynomial-time constant-factor approximations in a simple polygon. While we restrict ourselves to the anchored version, we achieve better approximation factors for any (fixed) $k$ than the ones Nilsson and Packer proposed for the case $k=2$ in~\cite{nilsson2023approximation}. 

\section{Preliminaries}
Let $P$ be a \textit{simple polygon}, i.e. a simply connected subset of $\mathbb{R}^2$. Denote by $\partial P$ the boundary of $P$, a polygonal chain that does not self-intersect consisting of $n$ vertices $v_1, v_2, \ldots, v_n$, which we assume to have integer coordinates. A simple polygon is \textit{orthogonal} if the internal angle at every vertex is either 90 (convex vertex) or 270 degrees (reflex vertex).

For a point $x\in P$, its \textit{visibility region}, denoted by $V(x)$, is the set of all points $y$ such that the segment $xy$ does not intersect with the exterior of $P$: we say $x$ and $y$ and see each other. For an arbitrary set $X\subseteq P$, the visibility region of $X$, $V(X)$, is the set of all points that are seen by some point in $X$. When $X$ is either a point or a line segment, $V(X)$ is necessarily a simple subpolygon of $P$ with at most $n$ vertices and can be computed in $O(n)$ time~\cite{guibas1986linear, toth2017handbook}. We use $|\cdot|$ to denote Euclidean measure of geometric objects (e.g., length or area).

The first problem we investigate is the anchored $k$-WRP, where the polygon $P$ is orthogonal and movements of the watchmen are rectilinear (L1 distance). Given a simple orthogonal polygon $P$ and a starting point $s \in \partial P$, we compute $k$ tours $\{\gamma_i\}$ within $P$ consisting of horizontal and vertical segments, all starting from $s$ such that $\bigcup\limits_{i=1,\ldots,k}V(\gamma_i) = P$ and $\max\limits_{i=1,\ldots,k}|\gamma_i|$ is minimized. We also assume that the coordinates of the vertices of $P$ are integers. It is known that even for $k = 2$, the general $k$-WRP in a simple polygon is (weakly) NP-hard via a simple reduction from \textsc{Partition} \cite{mitchell1991watchman}. The reduction can be easily modified to show that our version is also NP-hard. The second problem, Q$k$-WRP, generalizes the first to $\left|\bigcup\limits_{i = 1,\ldots, k}V(\gamma_i)\right| \ge A$ for some $0 \le A \le |P|$. The fraction of area seen, $\frac{A}{|P|}$, can be interpreted as the probability that the watchmen detect a target uniformly distributed in $P$. We consider the Q$k$-WRP in a simple polygon, where the watchmen have unrestricted movement (not limited to horizontal and vertical).

\section{$k$-Watchmen in a Simple Orthogonal Polygon}
\label{sect:kwatchman}

\paragraph{Dynamic programming exact algorithm} A \textit{visibility cut} $c_i$ with respect to the starting point $s$ is a chord obtained from extending the edge $e$ incident on a reflex vertex, $v_i$, where $e$ is the edge whose extension creates a convex vertex at $v_i$ in the subpolygon containing $s$. The other subpolygon (not containing $s$) is the \textit{pocket} induced by $c_i$. Not all reflex vertices induce a visibility cut. An \textit{essential cut} is a visibility cut whose pocket does not fully contain any other pocket (Figure~\ref{fig:essential_cuts}). In general, essential cuts may intersect with each other.

    \begin{figure}[!h]
		\centering
		\includegraphics[scale=1.3]{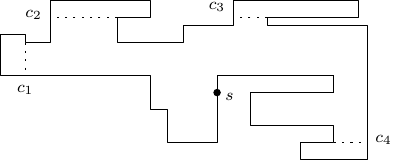}
		\caption{The essential cuts (dashed).}
		\label{fig:essential_cuts}
    \end{figure}

\begin{lemma}
    $\bigcup\limits_{i=1,\ldots,k}V(\gamma_i) = P$ if and only if $\{\gamma_i\}$ collectively visit all essential cuts of $P$.
\end{lemma}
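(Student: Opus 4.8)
The plan is to prove both implications in contrapositive form, using throughout that every tour $\gamma_i$ is connected, contains the common basepoint $s$, and that $s$ lies on the non-pocket side of every visibility cut. The single recurring tool is elementary: if a connected set contains $s$ and is disjoint from a chord $\ell$ that has $s$ strictly on one side, then the whole set lies on $s$'s side of $\ell$. I treat the two directions separately, expecting one to be routine and the other to carry all the difficulty.

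For the direction ``$\bigcup_i V(\gamma_i)=P \Rightarrow$ all essential cuts are collectively visited'' I would argue the contrapositive. Suppose some essential cut $c$, the extension of an edge $e$ at a reflex vertex $v$, is met by no tour. Then each $\gamma_i$ avoids $c$ and hence, by the tool above, lies entirely on the $s$-side of $c$. I would then exhibit a witness point $q$ in the pocket of $c$, taken just across $c$ from $v$ on the side occluded by $e$: since $e\subset\partial P$ blocks every line of sight arriving from the $s$-side, the only points of $P$ that see $q$ lie on the pocket side of $c$ (or on $c$ itself), which is exactly the defining property of a visibility cut. No tour sees $q$, so $q\notin\bigcup_iV(\gamma_i)$ although $q\in P$, contradicting coverage. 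This step uses only that $c$ is a visibility cut, not that it is essential.

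For the reverse direction, ``all essential cuts collectively visited $\Rightarrow \bigcup_i V(\gamma_i)=P$'', again by contrapositive, suppose a point $p\in P$ is seen by no tour. First, $p\notin V(s)$, for otherwise $s\in V(p)$ would force $p\in V(\gamma_i)$ for every $i$ through the shared basepoint. Hence the visibility polygon $V(p)$, which is star-shaped from $p$ and omits $s$, is separated from $s$ by one of its windows $w$: a chord lying on the line through $p$ and a reflex vertex $u$, with $V(p)$ locally on the $p$-side and crossing $w$ thus meaning entering $V(p)$. Because each $\gamma_i$ is disjoint from $V(p)$, connected, and contains $s$, every tour stays on the $s$-side of $w$. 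I would then locate an essential cut whose chord lies entirely in the open $p$-side halfpolygon $H$ cut off by $w$; reaching it would force a tour to cross $w$ (and so enter $V(p)$), which no tour does, so this essential cut is visited by no tour, the desired contradiction. Note that the same window $w$ serves all $k$ tours simultaneously, so the multi-robot aspect adds nothing here: I only ever need one essential cut to be missed by every tour at once.

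The crux, and the step I expect to fight with, is producing that essential cut inside $H$. The natural candidate is the visibility cut induced by the blocking vertex $u$, but its chord extends a polygon edge at $u$, whereas $w$ runs along the line $pu$; the two directions differ, so the pocket of $u$'s cut need not sit inside $H$, and $u$ may not even induce a cut at all. I would handle this by showing that the region of $P$ hidden behind $u$ relative to the $s$-side is nonempty (it contains a neighborhood of $p$) and is contained in $H$, so that it harbors at least one pocket; descending the finite pocket-containment order from any such pocket to an inclusion-minimal one then yields, by definition, an essential cut whose chord still lies in $H$. Making the containment ``hidden region $\subseteq H$'' precise, together with the degenerate cases in which $w$ passes through several collinear reflex vertices, is the delicate geometric part of the argument; everything else is bookkeeping around connectivity and the shared start point $s$.
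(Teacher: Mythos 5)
Your overall architecture is reasonable, but note first that the paper itself does none of this work: its proof is a one-line reduction to the known single-watchman fact (a connected set containing $s$ sees all of $P$ if and only if it meets every essential cut), applied to the connected set $\bigcup_i\gamma_i$ together with the identity $\bigcup_iV(\gamma_i)=V\bigl(\bigcup_i\gamma_i\bigr)$ --- which is exactly your observation that a single window $w$ traps all $k$ tours at once. Your first direction is essentially correct: the witness point must be taken in the pocket adjacent to the occluding edge $e$ itself (a point ``just across $c$ from $v$'' can be seen from the $s$-side through the chord portion of $c$), but your qualifier ``on the side occluded by $e$'' captures the right choice, and the argument only needs $c$ to be a visibility cut, as you say.

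The genuine gap is in the crux step of the converse, and it sits exactly where you trade the correct candidate for a vaguer one. The visibility cut obtained by extending, past $u$, the edge $e_1$ of $P$ at $u$ that occludes $K$ from $p$ \emph{does} have its pocket inside $\overline{H}$, contrary to your worry. Since $e_1$ must emanate from $u$ into the closed $K$-side of the line $L$ through $p$ and $u$ (that is what makes it the occluder), its extension past $u$ enters the $p$-side of $L$; being a straight segment it meets $L$ only at $u$, so the resulting chord never recrosses the window $w\subseteq L$ and lies entirely in $\overline{H}$. Consequently $K$, which is connected and disjoint from that chord, sits on one side of it --- the $s$-side --- and the pocket is the other side, contained in $\overline{H}$ and hence untouched by every tour; your finite descent to an inclusion-minimal pocket then produces the missed essential cut. (One still checks that extending $e_1$ creates a convex vertex in the $s$-side subpolygon, which holds because the $s$-side cone at $u$ is the reflex interior angle minus a half-plane, and one handles the degenerate case where $e_1$ is collinear with $pu$.) The substitute argument you propose instead does not close: a region ``hidden behind $u$ from the $s$-side'' need not \emph{contain} any full pocket of a visibility cut, because pockets routinely protrude into territory visible from $s$ --- a point just beyond the cut chord but far from the defining edge typically sees $s$ straight through the chord. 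So the premise ``hidden region harbors a pocket'' is false as stated, and without it your descent has nothing to start from.
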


\begin{proof}
    The lemma is simply an extension of the well known fact: a single tour sees all of $P$ if and only if it visits all essential cuts \cite{carlsson1999finding, chin1991shortest, dror2003touring}.
\end{proof}
Denote by $C_i$ the set of essential cuts visited by $\gamma_i$.
\begin{corollary}
    \label{lem:shortest}
    There exists an optimal solution $\{\gamma_i\}$ such that for any $i$, $\gamma_i$ is the shortest route to visit all cuts in $C_i$ and $s$ in the order in which they appear around $\partial P$.
\end{corollary}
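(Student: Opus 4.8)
The plan is to use a straightforward exchange (replacement) argument, applying the known single-watchman structure tour by tour. First I would start from an arbitrary optimal solution $\{\gamma_i\}$, i.e., one minimizing $\max_i |\gamma_i|$ subject to $\bigcup_i V(\gamma_i) = P$. For each $i$, let $C_i$ denote the set of essential cuts that $\gamma_i$ visits; by the Lemma, feasibility of the solution is exactly the condition that $\bigcup_i C_i$ is the full set of essential cuts of $P$.

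Next, for each $i$ independently, I would replace $\gamma_i$ by $\gamma_i'$, defined as the shortest closed route through $s$ that visits every cut in $C_i$. Since $\gamma_i$ is itself such a route, we have $|\gamma_i'| \le |\gamma_i|$, and hence $\max_i |\gamma_i'| \le \max_i |\gamma_i|$. Moreover $\gamma_i'$ still visits every cut of $C_i$, so the collection $\{\gamma_i'\}$ collectively visits all essential cuts and is therefore feasible by the Lemma. Consequently $\{\gamma_i'\}$ is again optimal, and by construction each of its tours is a shortest route visiting its own cut set together with $s$.

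The remaining claim is the ordering statement: that such a shortest route visits the cuts of $C_i$ (and $s$) in the cyclic order in which they appear around $\partial P$. For this I would invoke the classical single-watchman result underlying the Lemma: the shortest route through $s$ visiting a prescribed set of essential cuts touches them in boundary order, equivalently the problem reduces to a Touring Polygons instance in which the cuts are convex ``fences'' ordered along $\partial P$~\cite{dror2003touring}. Concretely, one argues by an uncrossing/local-exchange argument that if an optimal tour touched two cuts out of their boundary order, then rerouting between the two contact points yields a tour that is not longer and touches them in order, contradicting minimality.

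The main obstacle is making this ordering step rigorous when essential cuts intersect one another, since then ``the order in which they appear around $\partial P$'' must be defined carefully and the uncrossing argument must account for the geometry of crossing chords. I expect to handle this exactly as in the single-watchman setting, where intersecting essential cuts are nonetheless consistently ordered by their appearance on $\partial P$ and the shortest touring route respects that order; the multi-robot aspect contributes nothing new here, since the replacement is carried out on each tour separately and the coverage condition decouples across tours.
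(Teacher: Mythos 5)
Your proposal is correct and matches the paper's intent: the paper states this corollary without an explicit proof, treating it as an immediate consequence of the preceding lemma via exactly the per-tour exchange argument you describe, combined with the classical single-watchman fact (cited there from Chin--Ntafos and Dror et al.) that a shortest route through $s$ visiting a set of essential cuts touches them in the order they appear along $\partial P$. Your added care about intersecting cuts is a reasonable refinement, but it is handled in the single-watchman literature the paper relies on and introduces no new issue in the multi-robot setting.
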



Consider the decomposition of $P$ into rectangular cells by the maximal (within $P$) extensions of all edges, as well as a horizontal and vertical line through $s$; this is known as the Hanan grid (Figure~\ref{fig:HanagridinP}).
	
	\begin{figure}[!h]
		\centering
		\includegraphics[scale=1.3]{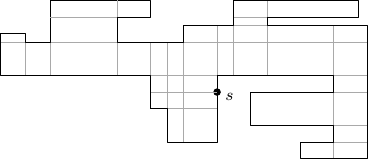}
		\caption{The Hanan grid formed by extensions of all edges in $P$.}
		\label{fig:HanagridinP}
	\end{figure}

 \begin{lemma}
 \label{lem:discrete_viewpoints}
     There exists an optimal solution $\{\gamma_i\}$ within the Hanan grid.
 \end{lemma}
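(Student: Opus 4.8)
The plan is to begin with the optimal solution $\{\gamma_i\}$ furnished by Corollary~\ref{lem:shortest}, in which each $\gamma_i$ is a shortest rectilinear tour visiting $s$ together with the cuts of $C_i$ in their cyclic boundary order, and to deform each tour onto the Hanan grid without ever increasing its length. The first observation is that every feature constraining feasibility is grid-aligned: since $P$ is orthogonal, each essential cut is the extension of an axis-parallel polygon edge and hence lies on a horizontal or vertical line of the Hanan grid, the boundary $\partial P$ is a union of grid segments, and the lines through $s$ are grid lines by construction. Thus, for a fixed assignment $C_i$ and ordering, the set of feasible tours is cut out entirely by horizontal and vertical grid lines, which is exactly the structure we will exploit.

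Next I would argue tour by tour and, within a tour, segment by segment. Write $\gamma_i$ as an alternating sequence of maximal horizontal and vertical segments. Let $h$ be a maximal horizontal segment at height $y_0$, and suppose $y_0$ is not the $y$-coordinate of any grid line. Translating $h$ vertically by a signed amount $\delta$ leaves the length of $h$ unchanged and alters the lengths of the (at most two) vertical segments incident to its endpoints by a linear function of $\delta$; consequently $|\gamma_i|$ is, for small $|\delta|$, an affine function of $\delta$. I would move $h$ in a direction in which this function is non-increasing and continue until the first event occurs: $h$ reaches a horizontal grid line, an incident vertical segment contracts to zero length (which only simplifies the tour and cannot increase its length), or $h$ meets $\partial P$ or a cut it is responsible for visiting. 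In every case the new height $y_0$ becomes a grid coordinate, or the combinatorial complexity of the tour strictly decreases. Processing vertical segments symmetrically and iterating, the procedure terminates with every segment of $\gamma_i$ lying on a grid line.

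The step I expect to be the main obstacle is maintaining feasibility during these slides, namely guaranteeing that the deformed tour still visits all cuts of $C_i$ and remains inside $P$. The delicate case is when a segment is simultaneously pressed against $\partial P$ on one side and required to reach a cut on the other, or when a single segment is the witness for a cut visit: if $h$ already lies on a horizontal cut it touches, then $h$ is on a grid line and there is nothing to do, and otherwise the requirement to keep touching a cut is itself a grid-aligned half-plane constraint on $\delta$. Since $|\gamma_i|$ is piecewise linear in $\delta$ and all feasibility constraints are grid-aligned half-planes, a length-non-increasing admissible direction always exists up to the next grid event, so each perturbation is legitimate and only finitely many are needed. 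Carrying this out for all $k$ tours yields a feasible solution on the Hanan grid whose every tour is no longer than before; in particular $\max_{i}|\gamma_i|$ does not increase, so an optimal solution lies within the Hanan grid.
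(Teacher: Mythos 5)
Your proof is correct, but it proceeds quite differently from the paper's. The paper's argument is global: it first replaces each $\gamma_i$ by the concatenation of geodesic L1 shortest paths between its consecutive contact points with the cuts of $C_i$, and then shows (using the orthogonality identity $|L1_P(s,p'_{i1})| + |p'_{i1}p_{i1}| = |L1_P(s,p_{i1})|$, i.e., that reaching a cut at its geodesically nearest point and sliding along the cut costs no more than heading to any other point of the cut) that each leg may be taken to be a geodesic L1 shortest path from the previous contact point to the \emph{cut}, and such paths live on the Hanan grid. Your argument is instead a local perturbation in the style of Hanan's theorem for rectilinear Steiner trees: slide each maximal axis-parallel segment in a length-non-increasing direction until it snaps to a grid line or the tour simplifies, observing that every feasibility constraint ($\partial P$, cut extents, the lines through $s$) is grid-aligned so that no constraint can bind at a non-grid coordinate. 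Both are sound; yours is more elementary and self-contained (it does not need the uniqueness of the contact point of geodesic paths with a cut), and it is worth noting in your favor that a segment, once snapped, is never dislodged by later slides of other segments, so termination is immediate. What the paper's proof buys, and yours does not directly, is the stronger structural fact that the dynamic program actually exploits: in the optimal solution each tour is a concatenation of geodesic L1 shortest paths whose contact point with each cut $c_j$ is the contact point of a geodesic shortest path from the previous contact point to $c_j$ --- this is what legitimizes restricting $p'$ in recursion \eqref{eqn:recursion_DP}. To serve the same role in the paper, your proof would need to be combined with Corollary~\ref{lem:shortest} and a short additional argument recovering that characterization; as a proof of the lemma as literally stated, it is complete. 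Two small points you should make explicit: the far endpoint of each cut lies on an edge of $\partial P$ whose extension is itself a grid line, which is what makes the cut-extent constraints grid-aligned; and segments incident to $s$ already lie on the grid lines through $s$, so the sliding never needs to displace the anchor.
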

 \begin{proof}
     Given an optimal solution $\{\gamma_i\}$, let $C_i = \{c_{i1}, \ldots, c_{ij}\}$ (in order around $\partial P$) and $p_{i1}, \ldots, p_{ij}$ be the point where $\gamma_i$ first makes contact with $c_{i1}, \ldots, c_{ij}$. Denote by $L1_P(x,y)$ a geodesic L1 shortest path between $x$ and $y$, a rectilinear shortest path constrained to stay within $P$. (For an overview of geodesic shortest paths in both L1 and L2 metrics, see~\cite{mitchell2000geometric}.)
     
     First, note that for every $i$, we may replace $\gamma_i$ with a concatenation of geodesic L1 shortest paths, namely $\gamma_i:=L1_p(s, p_{i1})\cup L1_P(p_{i1}, p_{i2})\cup \ldots\cup L1_P(p_{ij}, s)$ without increasing $\max\limits_{i = 1,\ldots,k}\{|\gamma_i|\}$ while maintaining visibility coverage of $P$.
     
     We argue that $L1_p(s, p_{i1})$ is a geodesic L1 shortest path from $s$ to $c_{i1}$. Suppose to the contrary, that geodesic L1 shortest paths from $s$ to $c_{i1}$ make contact with $c_{i1}$ at $p'_{i1} \ne p_{i1}$ (all geodesic L1 shortest paths from a point to a segment have the same endpoint). Due to orthogonality $|L1_P(s,p'_{i1})| + |p'_{i1}p_{i1}| = |L1_P(s,p_{i1})|$, which means $|L1_P(s,p_{i1})| + |L1_P(p_{i1}, p_{i2})| = |L1_P(s,p'_{i1})| + |p'_{i1}p_{i1}| + |L1_P(p_{i1}, p_{i2})| \ge |L1_P(s,p'_{i1})| + |L1_P(p'_{i1}, p_{i2})|$. This implies $\gamma_i$ should take a geodesic L1 shortest path from $s$ to $c_{i1}$, and it suffices to find such a path within the Hanan grid. By a straightforward inductive argument, we can show the same for any portion of $\gamma_i$ between any two essential cuts.
 \end{proof}

 Corollary~\ref{lem:shortest} and Lemma~\ref{lem:discrete_viewpoints} allow us to reduce the problem to that of finding a set of grid points on the essential cuts for which each route is responsible. Then, each route is simply the concatenation of L1 shortest paths between those points. Let $\{c_1, c_2, \ldots, c_m\}$ be the set of essential cuts in order around $\partial P$ ($s$ lies between $c_1$ and $c_m$). We define each subproblem $(c_j, p_1, l_1, \ldots, p_k, l_k)$ by an essential cut $c_j$, $k$ Hanan grid points $p_1, \ldots, p_k$ on essential cuts $c_1, \ldots, c_j$ (and $s$) and $k$ integers $l_1, \ldots, l_k$. Refer to Figure~\ref{fig:subproblem} for an illustration. Subproblem $(c_j, p_1, l_1, \ldots, p_k, l_k)=$ TRUE if and only if there exists a collection of $k$ paths $\Gamma_1, \ldots, \Gamma_k$ collectively visiting all essential cuts from $c_1$ up to $c_j$ such that
 \begin{itemize}
     \item $\Gamma_i$ starts at $s$, ends at $p_i$,
     \item $|\Gamma_i| = l_i$.
 \end{itemize}
 \begin{figure}[h]
    \centering
    \includegraphics[width=0.5\textwidth]{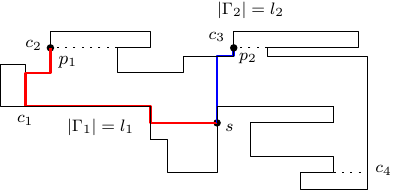}
    \caption{An example subproblem $(c_3, p_1, l_1, p_2, l_2)$.}
    \label{fig:subproblem}
\end{figure}
 The recursion is as follows. For each Hanan grid point $p\in c_j$ and $i = 1, \ldots, k$
\begin{align}
\begin{split}
\label{eqn:recursion_DP}
    &(c_j, p_1, l_1, \ldots, p_i:=p, l_i, \ldots, p_k, l_k)\\ \hspace{-0.5cm}= &\bigvee\limits_{p'}(c_{j-1},p_1, l_1, \ldots, p_i:=p', l_i - |L1_P(p, p')|, \ldots, p_k, l_k)
    \end{split}
\end{align}
where $p'$ is taken from the set of all Hanan grid points on the cuts $c_1, \ldots, c_{j-1}$ such that geodesic L1 shortest paths from $p'$ to $c_j$ make contact with $c_j$ at $p$ (Lemma~\ref{lem:discrete_viewpoints}).
The base case is simply $(s, s,0,\ldots,s,0) = \text{TRUE}$. After tabulating all subproblems, we take the subproblem $(c_m, p_1, l_1, \ldots, p_k, l_k)$ (such that $(c_m, p_1, l_1, \ldots, p_k, l_k) = $ TRUE) with the minimum $\max\limits_{i=1,\ldots, k}\{l_i + |L1_P(p_i, s)|\}$ and return the tours $\{\gamma_i := \Gamma_i \cup L1_P(p_i, s)\}$.
 \paragraph{Proof of correctness} Our proof of correctness relies on two arguments:
 \begin{itemize}
     \item Since the paths associated with subproblem $(c_{j-1},p_1, l_1, \ldots, p_i:=p', l_i - |L1_P(p, p')|, \ldots, p_k, l_k)$ visit all essential cuts up to $c_{j-1}$, the paths associated with subproblem $(c_j, p_1, l_1, \ldots, p_i:=p, l_i, \ldots, p_k, l_k)$ also visit all essential cuts up to $c_j$ since $p\in c_j$. By induction, the tours returned hence visit all essential cuts.
     \item $\gamma_i$ consists of geodesic L1 shortest paths between contact points with essential cuts (proof of Lemma~\ref{lem:discrete_viewpoints}). If we identify two consecutive contact points on $\gamma_i$, say $p'$ and $p$ in that order, then the length of the portion of $\gamma_i$ from $s$ to $p$ is $l_i$ if and only if the length of the portion of $\gamma_i$ from $s$ to $p'$ is $l_i - |L1_P(p, p')|$.
 \end{itemize}
\paragraph{Analysis of running time} There are $O(n)$ essential cuts, $O(n)$ Hanan grid points on each cut. Each tour $\gamma_i$ must be no longer than $nD$, where $D$ is the length of the longest edge of $P$, therefore $l_i$ is bounded by $nD$. In total, there are $O[n\cdot n^{2k}\cdot (nD)^k] = O(n^{3k +1}D^k)$ subproblems. We pre-compute geodesic L1 shortest paths between Hanan grid points, as well as between Hanan grid points and essential cuts, which equates to solving the \textsc{All Pairs Shortest Path} problem in the embedded graph of the Hanan grid. Then, we can solve each subproblem by iterating through at most $O(n^2)$ previously solved subproblems. Thus, the total running time is $O(n^{3k+3}D^k)$, which is pseudopolynomial for fixed $k$. This is in congruence with the weak NP-hardness from \textsc{Partition}, for which there exists a pseudopolynomial (polynomial in the number of input integers and the largest input integer) time algorithm. A tighter time bound is $O(n^{2k+3}L^k)$, where $L$ is the length of a shortest single orthogonal watchman route of $P$, which is computable in $O(n)$ time if $P$ is simple and orthogonal~\cite{CN}. Clearly $L \le |\partial P| \le nD$ and $\max\limits_{i=1,\ldots,k}|\gamma_i| \le L$ (one shortest single watchman route and $k-1$ routes of length 0 is a feasible solution to the $k$-WRP). In addition, we need not consider any $L1_P(p, p')$ whose length is greater than $L$ for recursion \eqref{eqn:recursion_DP} of the dynamic programming.

\paragraph{Fully polynomial-time approximation scheme} To achieve fully polynomial running time for fixed $k$, we bound the number of subproblems by ``bucketing'' the lengths of paths in $P$. Let $\{\gamma_i\}$ be an optimal collection of $k$ routes. Consider that $L \le \sum\limits_{i=1, \ldots, k}|\gamma_i|$ (the concatenation of $\{\gamma_i\}$ can be considered a single watchman route) hence
\begin{align}
\label{inq:lower_bound}
    \frac{L}{k}\le \max\limits_{i=1,\ldots,k}|\gamma_i|\le L.
\end{align}

Given any $\varepsilon > 0$, we divide $L$ into $\lceil\frac{nk}{\varepsilon}\rceil$ uniform intervals, each no longer than $\frac{\varepsilon L}{nk}$. The length of any geodesic L1 shortest path we take into consideration for recursion \eqref{eqn:recursion_DP} must fall into one of the intervals, we round it down to the nearest interval endpoint. Then, apply the dynamic programming algorithm to the new instance with subproblems defined instead by intervals' endpoints. Let the solution returned be $\{\gamma_i'\}$. For clarity, we denote by $d(.)$ distance/length in the ``rounded down'' instance. Then
\begin{align}
    \label{inq:definition}\max\limits_{i=1,\ldots,k}|\gamma_i| \ge \max\limits_{i=1,\ldots,k}d(\gamma_i) \ge \max\limits_{i=1,\ldots,k}d(\gamma_i').\end{align}
The first inequality follows simply from the fact that we round down any distance from the original instance, the second inequality is by definition, since $\{\gamma_i'\}$ is an optimal solution of the new instance. Now, any route in $\{\gamma_i'\}$ must consist of at most $n$ geodesic L1 shortest paths between Hanan grid points on essential cuts, the length of each differs by no more than $\frac{\varepsilon L}{nk}$ between the original instance and the ``rounded down'' instance. Thus, for any $i$
\[|\gamma_i'| - d(\gamma_i') \le n\cdot\frac{\varepsilon L}{nk} \]
therefore
\begin{align}
\label{inq:rounding}\max\limits_{i=1,\ldots,k}d(\gamma_i') + \frac{\varepsilon L}{k} \ge \max\limits_{i=1,\ldots,k}|\gamma_i'|.\end{align}
Combining all three inequalities \eqref{inq:lower_bound}, \eqref{inq:definition}, \eqref{inq:rounding},  we get 
\[(1 + \varepsilon)\max\limits_{i=1,\ldots,k}|\gamma_i| \ge \max\limits_{i=1,\ldots,k}|\gamma_i'|\]
with a running time of $O\left(n^{2k+3}\left(\frac{nk}{\varepsilon}\right)^k\right)$.

\paragraph{Remark} The FPTAS for orthogonal movement (L1 distance) gives a polynomial time $(\sqrt{2} + \varepsilon)$-approximation to unrestricted movement (L2 distance). 

\begin{theorem}
    For any fixed $k$, the anchored $k$-WRP in a simple orthogonal polygon has an FPTAS for the L1 metric and a polynomial-time $(\sqrt{2} + \varepsilon)$-approximation for the L2 metric. 
\end{theorem}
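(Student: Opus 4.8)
The plan is to assemble the theorem directly from the construction and inequalities already developed, handling the L1 and L2 claims in turn. For the FPTAS in the L1 metric, I would first invoke the exact dynamic program, whose correctness was established by the two induction arguments above and whose optimum is a collection $\{\gamma_i\}$ of orthogonal routes. The approximation guarantee then follows by chaining \eqref{inq:lower_bound}, \eqref{inq:definition}, and \eqref{inq:rounding}: the lower bound $L/k \le \max_i |\gamma_i|$ is precisely what converts the additive rounding error $\varepsilon L/k$ into the multiplicative factor, yielding $\max_i |\gamma_i'| \le (1+\varepsilon)\max_i |\gamma_i|$. For the running time I would substitute the bucket count $\lceil nk/\varepsilon\rceil$ in place of the $(nD)^k$ (or $L^k$) factor in the subproblem count, giving $O\!\left(n^{2k+3}(nk/\varepsilon)^k\right)$, polynomial in $n$ and $1/\varepsilon$ for every fixed $k$. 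This settles the FPTAS.

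For the L2 claim the key step is to relate the two optima. Let $\mathrm{OPT}_2$ be the optimal max-route length under unrestricted (L2) movement, achieved by routes $\{\gamma_i^{\star}\}$, and let $\mathrm{OPT}_1$ be the L1 optimum over orthogonal routes. I would prove $\mathrm{OPT}_1 \le \sqrt{2}\,\mathrm{OPT}_2$ by \emph{orthogonalizing} each L2-optimal route: replace $\gamma_i^{\star}$ by the concatenation of geodesic L1 shortest paths between its successive contact points with essential cuts, exactly the substitution used in the proof of Lemma~\ref{lem:discrete_viewpoints}. Each such geodesic has L1 length no greater than the L1 length of the corresponding piece of $\gamma_i^{\star}$, and for any polygonal curve the L1 length is at most $\sqrt{2}$ times its L2 length (segmentwise Cauchy--Schwarz, then summed). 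Because the new routes keep the same contact points on the same cuts, they still collectively visit every essential cut and hence cover $P$ by the Lemma; thus they form a feasible orthogonal solution of max length at most $\sqrt{2}\,\mathrm{OPT}_2$, bounding $\mathrm{OPT}_1$.

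With this relation in hand I would run the L1 FPTAS to obtain orthogonal routes $\{\gamma_i'\}$ with $\max_i |\gamma_i'|_{1} \le (1+\varepsilon)\,\mathrm{OPT}_1$ and then measure them under L2. Since $|\cdot|_2 \le |\cdot|_1$ pointwise, $\max_i |\gamma_i'|_2 \le \max_i |\gamma_i'|_1 \le (1+\varepsilon)\sqrt{2}\,\mathrm{OPT}_2$; reparametrizing $\varepsilon \mapsto \varepsilon/\sqrt{2}$ absorbs the constant and yields the claimed $(\sqrt{2}+\varepsilon)$-approximation, with the same polynomial running time.

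The hard part will be the orthogonalization step: I must be certain that turning a non-axis-parallel L2 route into axis-parallel geodesics loses no essential cut, and that the per-segment $\sqrt{2}$ bound survives the passage from straight segments to polygon-constrained geodesic L1 distance. The first concern is dispatched by retaining the original contact points on their cuts, and the second by the observation that the geodesic only \emph{shortens} the L1 length relative to the curve it replaces, so the chain $\mathrm{OPT}_1 \le \max_i |\gamma_i^{\star}|_1 \le \sqrt{2}\max_i |\gamma_i^{\star}|_2 = \sqrt{2}\,\mathrm{OPT}_2$ goes through. Everything else is bookkeeping over inequalities already verified in the preceding paragraphs.
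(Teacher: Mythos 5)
Your proposal is correct and follows essentially the same route as the paper: the L1 FPTAS is obtained by chaining inequalities \eqref{inq:lower_bound}, \eqref{inq:definition}, and \eqref{inq:rounding} exactly as in the text, and the L2 claim rests on the standard relation $|\cdot|_2 \le |\cdot|_1 \le \sqrt{2}\,|\cdot|_2$ together with the orthogonalization via geodesic L1 paths between contact points, which is what the paper's one-line remark implicitly relies on. The only difference is that you spell out the $\mathrm{OPT}_1 \le \sqrt{2}\,\mathrm{OPT}_2$ argument that the paper leaves unstated, and your elaboration is sound.
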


\section{Quota $k$-Watchmen in a Simple Polygon}
\label{sect:kquotawatchman}

In this section, we assume $P$ is a general simple polygon and the watchmen can move in any direction within $P$.
\paragraph{Constant factor polynomial-time approximation} Let $\{\gamma_i\}$ be an optimal collection of quota $k$-watchman routes to achieve the visibility area quota of $A$ and let $OPT = \max\limits_{i=1,\ldots,k}|\gamma_i|$. Denote by $C_g(r)$ the geodesic disk of radius $r$ centered at $s$, i.e. the locus of all points within geodesic distance (length of the geodesic shortest path) of $r$ from $s$. Let $r=r_{\min}$, where $r_{\min}$ is the smallest value of $r$ such that $|V(C_g(r))| \ge A$; $r_{\min}$ can be computed in $O(n^2\log n)$ time using the ``visibility wave'' methods in~\cite{quickest}. Clearly, $r_{\min} \le \frac{OPT}{2}$, since $C_g\left(\frac{OPT}{2}\right)$ encloses $\{\gamma_i\}$ and must see an area no smaller than $A$. If we repeatedly multiply $r$ by 2, at some point we must have $\frac{r}{2} \le \frac{OPT}{2} \le r$, suppose we have reached this point. Then, $C_g(r)$ contains $\{\gamma_i\}$. Let $\gamma$ be a shortest single route contained within $C_g(r)$ such that $|V(\gamma)|\ge A$ (note that $\gamma$ is not necessarily the shortest single quota watchman route overall in $P$). 

\begin{lemma}
   $ \frac{|\gamma|}{k}\le OPT \le |\gamma|.$
\end{lemma}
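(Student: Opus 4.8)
The plan is to prove the two inequalities separately, each by exhibiting a feasible solution to the appropriate problem. For the \emph{lower} bound $\frac{|\gamma|}{k}\le OPT$, I would assemble a single quota route out of the $k$ optimal tours and compare it against $\gamma$; for the \emph{upper} bound $OPT\le|\gamma|$, I would do the reverse, assembling a feasible $k$-watchmen solution out of the single route $\gamma$.

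For the lower bound, I would concatenate the optimal tours at their common anchor $s$. Since each $\gamma_i$ is a closed tour through $s$ with $|\gamma_i|\le OPT$, every point of $\gamma_i$ is reachable from $s$ along the shorter of its two arcs, hence lies within geodesic distance $OPT/2$ of $s$; thus $\gamma_i\subseteq C_g(OPT/2)\subseteq C_g(r)$, using $OPT/2\le r$ from the doubling step. Traversing $\gamma_1,\gamma_2,\ldots,\gamma_k$ in succession (each returns to $s$) produces a single closed route $\hat\gamma$ contained in $C_g(r)$ that sees $\bigcup_i V(\gamma_i)$, hence an area $\ge A$, and whose length is $\sum_i |\gamma_i|\le k\,OPT$. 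Because $\gamma$ is by definition a shortest route in $C_g(r)$ attaining the quota, $|\gamma|\le|\hat\gamma|\le k\,OPT$, which rearranges to $\frac{|\gamma|}{k}\le OPT$.

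For the upper bound, I would use the single route directly as a degenerate $k$-watchmen solution: assign one watchman to traverse $\gamma$ and leave the other $k-1$ watchmen idle at $s$ (tours of length $0$). Since $\gamma\subseteq C_g(r)\subseteq P$ is anchored at $s$ and $|V(\gamma)|\ge A$, this collection is feasible for the Q$k$-WRP, with objective $\max\{|\gamma|,0,\ldots,0\}=|\gamma|$. As $OPT$ is the optimum over all feasible $k$-watchmen solutions, $OPT\le|\gamma|$.

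The only delicate point is the containment $\hat\gamma\subseteq C_g(r)$ in the lower bound, which is precisely where the closed-tour structure is used: a closed route of length at most $OPT$ anchored at $s$ never leaves the geodesic disk of radius $OPT/2$, and the doubling schedule guarantees $OPT/2\le r$. Everything else is a straightforward feasibility argument — the shared anchor $s$ makes the concatenation a single legitimate route, and the idle watchmen make $\gamma$ a legitimate $k$-watchmen solution — so I expect no substantial obstacle beyond stating these two containments carefully.
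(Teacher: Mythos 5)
Your proof is correct and follows essentially the same approach as the paper, which simply points back to the concatenation argument and the ``one route plus $k-1$ idle watchmen'' argument used to derive inequality \eqref{inq:lower_bound} in Section~\ref{sect:kwatchman}. Your added care in verifying $\hat\gamma\subseteq C_g(r)$ (via $\gamma_i\subseteq C_g(OPT/2)$ and $OPT/2\le r$) is a detail the paper handles just before the lemma statement rather than inside the proof, so there is no substantive difference.
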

\begin{proof}
    Recall that in Section~\ref{sect:kwatchman}, we proved two similar inequalities for watchman routes with orthogonal movement seeing the whole polygon. The same holds here since orthogonality and quota did not play a part in the argument.
\end{proof}
We show how to approximate $\gamma$ (it is NP-hard to exactly compute $\gamma$), and that the number of times we multiply $r$ by 2 is polynomial in $n$.

\begin{lemma}
\label{lem:budget_watchman}
    \cite[Section 3]{huynh_et_al:LIPIcs.SWAT.2024.27} Given a budget $B \ge 0$ and any $\varepsilon > 0$, there exists an $O\left(\frac{n^5}{\varepsilon^6}\right)$ algorithm that computes a route of length at most $(1 + \varepsilon)B$ seeing as much area as any route of length $B$ within $C_g(r)$.
\end{lemma}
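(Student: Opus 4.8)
The statement is a bicriteria (budget-relaxed) FPTAS for the single-watchman \emph{budget} problem, so the plan is to trade a $(1+\varepsilon)$ slack in the length bound for the ability to discretize both the route and the coverage objective, and then to optimize by dynamic programming. Concretely, the plan is to (i) snap an arbitrary route of length at most $B$ onto a polynomial-size set of candidate viewpoints, paying for the snapping detours out of the $\varepsilon B$ slack; (ii) discretize the coverage objective so that ``area seen'' becomes a combinatorial, monotone quantity; and (iii) run a dynamic program over the discretized viewpoints that maximizes covered area subject to the relaxed length budget, in the same spirit as the length-bucketing FPTAS of Section~\ref{sect:kwatchman}.

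First I would fix a grid (or a set of portals along the essential cuts) of candidate viewpoints inside $C_g(r)$ with spacing $\Theta(\varepsilon B/n)$. Let $\rho^{*}$ be any route of length at most $B$. Using the geodesic structure of shortest routes in a simple polygon, $\rho^{*}$ has $O(n)$ relevant viewing positions, so I would reroute it through the nearest grid point at each such position; each detour costs $O(\varepsilon B/n)$ and the total length grows by at most $\varepsilon B$, yielding a route of length at most $(1+\varepsilon)B$ that views only from grid points. To argue this grid route still sees essentially all of $V(\rho^{*})$, I would invoke the Lipschitz property of visibility area, namely that moving a viewpoint by $\delta$ changes the area it sees by $O(\delta\cdot\mathrm{diam}(P))$, so a sufficiently fine grid keeps the lost area negligible (or absorbs it into the coverage discretization below).

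Second, I would overlay a fine arrangement (the visibility windows together with a background grid) partitioning $P\cap C_g(r)$ into $\mathrm{poly}(n/\varepsilon)$ elementary cells, replacing exact area by the total area of cells that are fully seen; the objective then becomes a monotone coverage function of the chosen viewpoints. Finally I would process candidate viewpoints in geodesic-distance (wavefront) order and maintain a DP table indexed by the current viewpoint and a bucketed value of length used so far (with $O(n/\varepsilon)$ length buckets), whose entry stores the maximum coverage achievable; reading off the entry with length at most $(1+\varepsilon)B$ and largest coverage gives the route. The budget form of the lemma then follows because this maximum coverage is at least the coverage of $\rho^{*}$, hence at least that of any length-$B$ route, and the bookkeeping over viewpoints, cells, and buckets is arranged to total $O(n^5/\varepsilon^6)$.

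The main obstacle is the submodularity of visibility coverage: the marginal area a new viewpoint contributes depends on the entire prefix of the route, so a naive additive DP over-counts overlapping visibility regions, and tracking exactly ``which cells are already seen'' would require exponential state. I expect this to be the crux, and I would address it by exploiting the linear order of essential cuts around $\partial P$ in a simple polygon, which lets an optimal route be taken to sweep monotonically so that the newly seen region at each step is a contiguous, locally computable piece; combined with the cell discretization, coverage can then be accumulated correctly inside the DP without storing the full covered set.
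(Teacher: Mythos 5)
There is a genuine gap, and it is exactly at the point you flag as needing care. The lemma requires the output route to see \emph{at least as much} area as any length-$B$ route --- all of the relaxation is in the length, none in the area. Your step (i) snaps each viewing position of $\rho^{*}$ to a nearby grid point and then argues via a ``Lipschitz property of visibility area'' that the lost area is negligible. That property is not available in the form you invoke it (in a comb- or pinhole-shaped polygon, perturbing a viewpoint by $\delta$ can change the visible area by far more than $O(\delta\cdot\mathrm{diam}(P))$, since many deep pockets can be grazed simultaneously), and even if it were, ``negligible loss'' still falls short of the zero-loss guarantee the lemma asserts; you would be proving a tricriteria statement (relaxed length \emph{and} relaxed area), not the stated bicriteria one. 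The idea you are missing is \emph{enclosure}: following~\cite{huynh_et_al:LIPIcs.SWAT.2024.27}, one takes the grid cells of side $\delta=O(\varepsilon B/n)$ containing the vertices of $\gamma_B$ and forms the boundary of their relative convex hull. This closed route has vertices in $S_{\delta,r}$, has length at most $(1+\varepsilon)B$ (the hull boundary exceeds $|\gamma_B|$ by $O(n\delta)$), and \emph{contains} $\gamma_B$, so in a simple polygon it sees a superset of $V(\gamma_B)$ --- no area is lost at all.

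The enclosure viewpoint also changes how the dynamic program works, which affects your step (iii). Your fix for the submodularity of coverage --- sweeping monotonically over essential cuts so that newly seen regions are contiguous --- does not apply here: essential cuts are irrelevant to the quota/budget version, since an optimal budget route need not visit any of them. The DP of~\cite{huynh_et_al:LIPIcs.SWAT.2024.27} instead enumerates \emph{relatively convex} routes with vertices in $S_{\delta,r}$, and it is the convex-position structure (not a cut ordering) that lets the area seen be accumulated consistently across subproblems. Your overall architecture (candidate viewpoint set of spacing $\Theta(\varepsilon B/n)$, length bucketing, DP maximizing coverage under a budget) matches the cited construction in spirit, but without the enclosure argument the correctness claim ``sees as much as any length-$B$ route'' does not go through.
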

We briefly describe the algorithm, and refer the readers to~\cite{huynh_et_al:LIPIcs.SWAT.2024.27} for more details. First, triangulate $P$, including $s$ as a vertex of the triangulation. Then, overlay onto the triangulation a regular square grid of side lengths $\delta = O\left(\frac{\varepsilon B}{n}\right)$ within an axis aligned square of size $B$-by-$B$ centered at $s$. We consider the set of (convex) cells that overlap (both fully and partially) with $C_g(r)$ and their vertices, $S_{\delta, r}$. Let $\gamma_B$ be the $B$-length route within $C_g(r)$ that achieves the most area of visibility. There exists a route of length at most $(1 + \varepsilon)B$ with vertices coming from $S_{\delta,r}$ enclosing $\gamma_B$, i.e. the boundary of the relative convex hull (the minimum-perimeter connected superset within $P$, see \cite{huynh_et_al:LIPIcs.SWAT.2024.27, mitchell2000geometric}) of the cells containing vertices of $\gamma_B$, thus seeing at least as much area as $\gamma_B$ (Figure~\ref{fig:approx_gamma_prime}). If $|\gamma| \le B\le \alpha|\gamma|$ for some $\alpha \ge 1$, using dynamic programming, the algorithm in~\cite[Section 3]{huynh_et_al:LIPIcs.SWAT.2024.27} computes a route $\gamma'$ of length no longer than $\alpha(1 + \varepsilon)|\gamma|$ with vertices in $S_{\delta,r}$ that sees the most area, which must be no smaller than $|V(\gamma_B)| \ge |V(\gamma)| \ge A$. Using Lemma~\ref{lem:bounding_gamma_prime}, we acquire a polynomial-sized set of values from which we can search for an appropriate $B$\@.
\begin{figure}[h]
    \centering
    \includegraphics[width=0.5\textwidth]{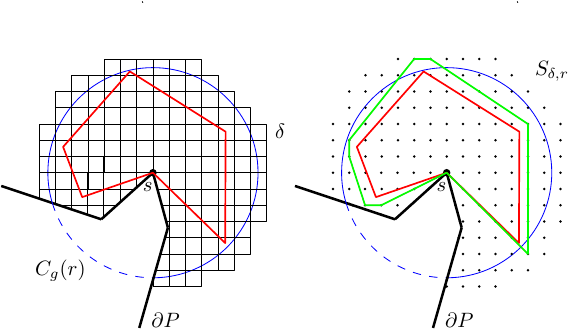}
    \caption{Left: $\gamma_B$ (red) is a tour no longer than $B$ within $C_g(r)$ (blue) that sees the most area. Right: enclosing $\gamma_B$ with a tour whose vertices are in $S_{\delta, r}$ seeing everything $\gamma_B$ sees (green).}
    \label{fig:approx_gamma_prime}
\end{figure}
\begin{lemma}
\label{lem:bounding_gamma_prime}
    $r_{\min}\le |\gamma| \le 9nr_{\min}$.
\end{lemma}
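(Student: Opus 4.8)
\emph{Plan.} I would prove the two inequalities separately, treating the right-hand bound as the substantive one. For the lower bound $r_{\min}\le|\gamma|$, the plan is to use that $\gamma$ is a closed route through $s$ of length $|\gamma|$: every point of $\gamma$ is reachable from $s$ along $\gamma$ in at most $|\gamma|/2$, so it lies within geodesic distance $|\gamma|/2$ of $s$, giving $\gamma\subseteq C_g(|\gamma|/2)$ and hence $V(\gamma)\subseteq V(C_g(|\gamma|/2))$. Since $|V(\gamma)|\ge A$, the disk $C_g(|\gamma|/2)$ also sees area at least $A$, and minimality of $r_{\min}$ yields $r_{\min}\le|\gamma|/2\le|\gamma|$. (If $\gamma$ were an open route the same argument gives the weaker $r_{\min}\le|\gamma|$, which still suffices.)

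For the upper bound I would first record a monotonicity reduction: for $r\ge r_{\min}$ the feasible region $C_g(r)$ only grows, so the length of the shortest feasible tour $\gamma$ is non-increasing in $r$ and is largest at $r=r_{\min}$. It therefore suffices to exhibit a single feasible tour inside $D:=C_g(r_{\min})$ of length at most $9nr_{\min}$, and I would take this tour to be the boundary $\partial D$ of the geodesic disk itself. The enabling visibility fact is $V(\partial D)=V(D)$: given $x\in D$ seeing a point $w$, walk from $x$ along the line $xw$ in the direction away from $w$ until first meeting $\partial D$ (equivalently, until leaving $P$, since $\partial P\cap \bar D\subseteq\partial D$); the boundary point so reached still sees $w$, because its segment to $w$ merely extends $xw$ through a portion of the ray already verified to lie in $P$. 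Since $|V(D)|\ge A$ by definition of $r_{\min}$, the tour $\partial D$ sees area at least $A$; and it is a single closed curve through $s$ because $D$ is simply connected (geodesic disks in a simple polygon are simply connected) and $s\in\partial P\cap\bar D\subseteq\partial D$.

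The crux is bounding $|\partial D|$, which I would split into subarcs of $\partial P$ and frontier arcs of the geodesic circle $\{x:\,d_g(s,x)=r_{\min}\}$. Each frontier arc lies in one cell of the shortest-path map, is centered at that cell's root vertex $v$, and has radius $r_{\min}-d_g(s,v)\le r_{\min}$; there are $O(n)$ such arcs, and the angles they subtend at any fixed root vertex sum to at most $2\pi$ over at most $n$ root vertices, so their total length is at most $2\pi n r_{\min}$. The $\partial P$ portions all lie within Euclidean distance $r_{\min}$ of $s$ (geodesic distance dominates Euclidean distance), so they meet at most $n$ edges of $\partial P$, each contributing a subsegment of length at most $2r_{\min}$, for a total of at most $2nr_{\min}$. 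Adding the two parts gives $|\partial D|\le(2\pi+2)nr_{\min}<9nr_{\min}$, and since $\partial D$ is feasible this bounds $|\gamma|$ at $r=r_{\min}$, hence for all relevant $r$.

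I expect the main obstacle to be the frontier-arc accounting: one must control both the number of arcs and their total subtended angle tightly enough to keep the constant below $9$, which draws on the structure of the shortest-path map (each reflex vertex contributing an angular spread equal to its exterior turn, and $s$ contributing at most its interior angle). The visibility identity $V(\partial D)=V(D)$ also deserves care at the transition points where $\partial D$ switches between $\partial P$ and a frontier arc. By contrast, the lower bound and the monotonicity reduction are routine.
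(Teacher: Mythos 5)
Your proof is correct and takes essentially the same approach as the paper: the lower bound via the geodesic disk of radius $|\gamma|/2$ enclosing the route, and the upper bound by exhibiting the tour $\partial C_g(r_{\min})$ (the paper prepends a path of length $2r_{\min}$ from $s$, whereas you note $s$ already lies on $\partial C_g(r_{\min})$) and decomposing its length into at most $n$ circular arcs of total length $2\pi n r_{\min}$ plus boundary segments of total length at most $2nr_{\min}$. Your explicit justifications of $V(\partial C_g(r_{\min}))=V(C_g(r_{\min}))$ and of the monotonicity in $r$ are details the paper leaves implicit, but the argument is the same.
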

\begin{proof}
     The first inequality is straightforward, $r_{min}\le OPT \le |\gamma|$.
    
    For the second inequality, first note that if a single watchman travels from $s$ to $\partial C_g(r_{min})$, follows along the whole of $\partial C_g(r_{min})$ then goes back to $s$, he sees an area of $A$, thus $|\partial C_g(r_{min})| + 2r_{min}$ is an upper bound on $|\gamma|$. We show that $|\partial C_g(r_{min})| + 2r_{\min} \le 9nr_{min}$. Observe that $\partial C_g(r_{min})$ consists of polygonal chains that are portions of $\partial P$ and at most $n$ circular arcs; the circular arcs have total length no greater than $2n\pi r_{min}$. Each segment in the polygonal part of $\partial C_g(r_{min})$ has length bounded by the sum of geodesic distances from its endpoints to $s$ (triangle inequality), which is no more than $2r_{min}$. There are at most $n$ segments in the polygonal portions of $\partial C_g(r_{min})$, therefore their total length is no greater than $2nr_{\min}$, implying $|\partial C_g(r_{min})| + 2r_{min} = 2nr_{min} + 2n\pi r_{min} + 2r_{min} \le 9nr_{min}$. 
\end{proof}
We divide $9nr_{min}$ into $\lceil\frac{9n}{\varepsilon}\rceil$ uniform intervals so that each is no longer than $\varepsilon r_{min}$: the smallest interval endpoint that is no smaller than $|\gamma|$ must also be no larger than $(1 + \varepsilon)|\gamma|$, and hence is the value of $B$ that we desire. We perform a binary search on the values $\left\{0, \frac{9nr_{min}}{\lceil\frac{9n}{\varepsilon}\rceil}, \ldots, 9nr_{min} \right\}$ as the input budget for the algorithm in Lemma~\ref{lem:budget_watchman}, and seek out the smallest value for which the output route $\gamma'$ sees an area no smaller than $A$\@. Clearly, $|\gamma'|\le (1+\varepsilon)^2|\gamma|$.

Lemma~\ref{lem:bounding_gamma_prime} also implies that the number of times we double $r$ is polynomially bounded, in particular, $O(\log n)$, since $r_{min} \le OPT \le 9nr_{min}$.

We are now ready to describe the approximation algorithm for the Q$k$-WRP as follows:
\begin{itemize}
    \item Step 1: Set $r:= r_{\min}$.
    \item Step 2: Compute $\gamma'$, a $(1 + \varepsilon)^2$-approximation to $\gamma$.
    \item Step 3: Divide $\gamma'$ into $k$ subpaths of equal length, each of which is bounded by $a_i, a_{i+1}\in \gamma'$ ($a_1 \equiv s \equiv a_{k+1}$) and denoted by $\gamma'_{a_ia_{i+1}}$.
    \item Step 4: For each $i$, we obtain $\gamma_i'$ by traversing the geodesic shortest path from $s$ to $a_i$, $\gamma'_{a_ia_{i+1}}$ and the geodesic shortest path from $a_{i+1}$ back to $s$.
    \item Step 5: Set $r:=2r$, then repeat from Step 2, until $r > 9nr_{min}$.
\end{itemize}
Finally, we return the collection of routes $\{\gamma_i'\}$ that minimizes $\max\limits_{i = 1, \ldots, k}|\gamma_i'|$ out of all collections from all values of $r$ in the doubling search. 

\paragraph{Analysis of running time} For each choice of $B$, we execute the $O\left(\frac{n^5}{\varepsilon^6}\right)$ algorithm, thus computing an approximation to $\gamma'$ for each value of $r$ takes $O\left(\frac{n^5}{\varepsilon^6}\log\left(\frac{n}{\varepsilon}\right)\right)$ time. This step dominates both computing $r_{\min}$ and deriving the collection $\{\gamma_i'\}$. Since there are $O(\log n)$ iterations of the doubling search for $r$, the overall running time is $O\left(\frac{n^5}{\varepsilon^6}\log\left(\frac{n}{\varepsilon}\right)\log n\right)$.

\begin{theorem}
    The algorithm described above has an approximation factor of $3 + \varepsilon$.
\end{theorem}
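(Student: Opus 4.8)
The plan is to show that the collection $\{\gamma_i'\}$ produced at a single ``good'' value of $r$ in the doubling search already satisfies $\max_i |\gamma_i'| \le (3+\varepsilon)\,OPT$; since the algorithm returns the best collection over all tested $r$, this suffices. First I would pin down that good value: because $r_{\min} \le OPT/2$ and we start the doubling at $r_{\min}$, the argument already given in the text guarantees an iteration with $OPT/2 \le r \le OPT$. I would then verify that this iteration is actually reached before the stopping condition $r > 9n\,r_{\min}$ fires, using $OPT \le |\gamma| \le 9n\,r_{\min}$ from Lemma~\ref{lem:bounding_gamma_prime}; hence $r \le OPT \le 9n\,r_{\min}$ and the good iteration lies inside the search range. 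At this $r$ we have $C_g(r) \supseteq C_g(OPT/2) \supseteq \bigcup_i \gamma_i$, so $\gamma$ is well defined and the two inequalities $|\gamma|/k \le OPT \le |\gamma|$ of the preceding lemma apply.

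Next I would bound each returned route by its three constituent pieces. By construction $\gamma_i'$ is the concatenation of a geodesic shortest path from $s$ to $a_i$, the subpath $\gamma'_{a_i a_{i+1}}$ of length $|\gamma'|/k$, and a geodesic shortest path from $a_{i+1}$ back to $s$, so $|\gamma_i'| = d_g(s,a_i) + |\gamma'|/k + d_g(a_{i+1},s)$, where $d_g(\cdot,\cdot)$ denotes geodesic distance. The middle term is controlled by combining $|\gamma'| \le (1+\varepsilon)^2|\gamma|$ from Step~2 with $|\gamma|/k \le OPT$, giving $|\gamma'|/k \le (1+\varepsilon)^2\,OPT$. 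For the two detour terms I would argue that every point of $\gamma'$, and in particular each $a_i$, lies within geodesic distance $r$ of $s$ (up to a lower-order term from the grid resolution $\delta = O(\varepsilon B/n)$, which for fixed $k$ is $O(\varepsilon\,OPT)$ and is absorbed), so $d_g(s,a_i) \le r \le OPT$. Summing the three pieces yields $|\gamma_i'| \le 2\,OPT + (1+\varepsilon)^2\,OPT = (3 + 2\varepsilon + \varepsilon^2)\,OPT$, and running the procedure with $\varepsilon$ rescaled by a constant delivers the claimed factor $3+\varepsilon$.

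I would then confirm feasibility, which is automatic: since the subpaths $\gamma'_{a_i a_{i+1}}$ partition $\gamma'$ and each $\gamma_i'$ contains its piece, we have $\bigcup_i \gamma_i' \supseteq \gamma'$, hence $|V(\bigcup_i \gamma_i')| \ge |V(\gamma')| \ge A$, so the returned collection meets the quota. The running-time analysis already given shows that only $O(\log n)$ values of $r$ are tried, each handled in polynomial time, so the whole procedure runs in polynomial time.

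The step I expect to be the main obstacle is the detour bound $d_g(s,a_i) \le r$. The split points $a_i$ lie on $\gamma'$, whose vertices are drawn from $S_{\delta,r}$ (cells merely \emph{overlapping} $C_g(r)$) rather than strictly inside the geodesic disk, so a priori $a_i$ may sit slightly outside $C_g(r)$, and $\gamma'$ may traverse segments that bulge beyond it. Making this rigorous requires exploiting the geodesic convexity of $C_g(r)$ together with a careful accounting showing that the overshoot is at most $O(\delta) = O(\varepsilon\,OPT)$ for fixed $k$, so that it can be folded into the final $\varepsilon$ without affecting the constant $3$. Everything else — the existence of the good $r$, the factor-of-two cost of the two detours, and the feasibility of the union — is routine once this containment-up-to-$\varepsilon$ is established.
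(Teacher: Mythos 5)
Your proposal is correct and follows essentially the same route as the paper: identify the iteration with $\frac{OPT}{2} \le r \le OPT$, bound each returned route by the two geodesic detours of length at most $(1+\varepsilon)r \le (1+\varepsilon)OPT$ plus the subpath of length $\frac{|\gamma'|}{k} \le (1+\varepsilon)^2 OPT$, and fold the grid-resolution overshoot into the final $\varepsilon$. The obstacle you flag---points of $\gamma'$ possibly lying slightly outside $C_g(r)$---is exactly the issue the paper addresses by choosing $\delta = O\left(\frac{\varepsilon B}{n}\right)$ so that every point of $\gamma'$ is within geodesic distance $r + \varepsilon r$ of $s$.
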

\begin{proof}
    Since all our choices for $B$ are no larger than $9nr_{\min}$, we can choose an appropriate $\delta = O\left(\frac{\varepsilon B}{n}\right)$ so that the geodesic distance from any point on $\gamma'$ to $s$ is no longer than $r + \varepsilon r$. Thus, when $\frac{r}{2} \le \frac{OPT}{2} \le r$, any one of the $k$ routes returned by the algorithm is no longer than $\frac{|\gamma'|}{k} + 2r + 2\varepsilon r \le [(1 + \varepsilon)^2 + 2 + 2\varepsilon]OPT = (3 + \varepsilon')OPT$, where $\varepsilon' = 4\varepsilon + \varepsilon^2$. Note that $\frac{1}{\varepsilon} = \Theta\left(\frac{1}{\varepsilon'}\right)$ as $\varepsilon$ and $\varepsilon'$ approach 0, so the running time is in the same order when written in terms of $\varepsilon'$.
\end{proof}

\paragraph{Improving the approximation factor} In the approximation algorithm earlier, we gradually expand $C_g(r)$ until $C_g(r)$ contains an optimal $\{\gamma_i\}$. If in each iteration, we instead multiply $r$ by a smaller factor, namely $(1 + \varepsilon)$, then at some point $\frac{r}{(1 + \varepsilon)} \le \frac{OPT}{2} \le r$. The distance from each point on $\gamma'$ to $s$ is then no greater than $r + \varepsilon r \le (1 + \varepsilon)^2\frac{OPT}{2}$. Hence, the length of any of the $k$ routes returned by the approximation algorithm is bounded by $\frac{|\gamma'|}{k} + (1 + \varepsilon)^2\frac{OPT}{2} + (1 + \varepsilon)^2\frac{OPT}{2} \le (2 + \varepsilon')OPT$, where $\varepsilon' = 4\varepsilon + 2\varepsilon^2$.

There is however, a trade-off between the approximation factor and the number of iterations of the multiplicative search for $r$. If we multiply $r$ by $(1 + \varepsilon)$ each time, the search requires $O(\log_{1+\varepsilon}n)$ iterations. Note that

\[\log_{1+\varepsilon}n = \log n \frac{\ln2}{\ln(1+\varepsilon)} = \log nO\left(\frac{1}{\varepsilon}\right).\]

In summary, we can achieve an approximation ratio of $(2 + \varepsilon')$ with a running time of $O\left(\frac{n^5}{\varepsilon^7}\log\left(\frac{n}{\varepsilon}\right)\log n\right) = O\left(\frac{n^5}{\varepsilon'^7}\log\left(\frac{n}{\varepsilon'}\right)\log n\right)$ (since $\frac{1}{\varepsilon} = \Theta\left(\frac{1}{\varepsilon'}\right)$). 
\bibliographystyle{plainurl}
\bibliography{refs.bib} 
\end{document}